\documentclass[letterpaper,12pt,leqno]{article}
\usepackage{paper}
\hypersetup{pdftitle={A Sequential Foundation for Two-Level Nested Logit}}
\newcommand{\bib}{references.bib}

\newcommand{\Ihg}{I_{hg}}
\newcommand{\Ig}{I_{g}}

\begin{document}

\title{\Large A Sequential Foundation for Two-Level Nested Logit}
\author{Davide Luparello\thanks{\href{mailto:dluparello@fsu.edu}{dluparello@fsu.edu}.}}
\affil{{\normalsize\it Department of Economics, Florida State University, 278 Bellamy Building,}\\{\normalsize\it 113 Collegiate Loop, Tallahassee, FL 32306, USA}}
\date{\today}

\begin{titlepage}\maketitle
\begin{abstract}
I show that the two-level nested logit model arises exactly from a sequential choice process in which group, subgroup, and product shocks are revealed in stages. The result extends to any number of levels and holds for any nesting parameters below one, without the bounds and ordering that random utility maximization requires. Parameter estimates that applied work discards for violating the standard ordering or the lower bound are consistent with sequential choice. The linear inversion and welfare formula hold. When the standard ordering fails, improving one product can raise a rival's market share in its subgroup, even with single-unit demand.
\end{abstract}

\medskip
\noindent\textbf{Keywords:} nested logit; sequential choice; random utility; demand estimation

\noindent\textbf{JEL codes:} C25, D11, D12, L13
\end{titlepage}

\section{Introduction}\label{sec:intro}

The two-level nested logit is a standard demand model for markets segmented along two dimensions, and its parameters are typically estimated through the linear inversion of \citet{Berry1994} and \citet{Verboven1996}. The model is globally consistent with random utility maximization only if the nesting parameters for products within a subgroup, $\sigma_1$, and for subgroups within a group, $\sigma_2$, satisfy $0 \le \sigma_2 \le \sigma_1 < 1$ \citep{McFadden1978,McFadden1981,Cardell1997}, and applied work uses this restriction to select the nesting order, discarding trees whose estimates fall outside it.\footnote{In the car market, \citet{Goldberg1995} discards a nesting order with an inclusive value coefficient above one, \citet{Verboven1996} one with $\sigma_1$ significantly below $\sigma_2$, and \citet{GrigolonVerboven2014} one that nests market segment within country of origin and yields estimates outside the restricted parameter region. In travel demand, \citet[p.~319]{BenAkivaLerman1985} reject a nesting of mode over destination that fits the data better, because its inclusive value coefficient is above one.}

In this paper, I show that the two-level nested logit share system and its linear inversion arise exactly from a sequential choice process in which group, subgroup, and product taste shocks are independent extreme value draws, revealed one stage at a time. At each stage, the consumer compares nests through their inclusive values, the expected utility, net of the shocks already observed, of choosing optimally within each nest once the taste shocks within it are revealed. The process is well defined for any $\sigma_1, \sigma_2 < 1$, whatever their order, and the result extends to any number of levels.

The equivalence has three implications. First, the estimating equation of \citet{Berry1994} and \citet{Verboven1996} is unchanged, so existing estimates outside the restricted parameter region can be read as estimates of the sequential process. Second, ex ante expected utility equals the nested logit inclusive value, so the standard welfare formula holds for any $\sigma_1, \sigma_2 < 1$. Third, when $\sigma_2 > \sigma_1$, raising the mean utility of one product can raise the choice probability of another in the same subgroup, because the improvement draws consumers into the subgroup before they observe the product shocks.

Nested logit has long been read as a hierarchy of choices. \citet{Williams1977} represents it as a utility tree in which the consumer, with full information, compares the realized maximum utility of each nest. \citet[pp.~211--213]{BenAkiva1973} reads logsum coefficients different from one as evidence of sequential choice but assumes a logit model at each stage rather than deriving it, and \citet{Rust1987} and \citet{Train2003} obtain logsum terms from shocks revealed over time, where the logsum coefficient weights future utility. \citet[p.~113, note~6]{TrainMcFaddenBenAkiva1987} observe that an inclusive value coefficient above one may be attributed to dynamic aspects of choice that the random utility model does not represent; the sequential process is one dynamic account of this kind. \citet[p.~356, note~4]{DalyZachary1978} note that the static model admits no representation of the maximum utility within a nest as its inclusive value plus a random term unaffected by the mean utilities in the nest. In the sequential process, the consumer values a nest, before observing the shocks within it, at its inclusive value plus an independent shock, from which I derive a logit model at each stage and nested logit for any $\sigma_1, \sigma_2 < 1$.

Two further literatures concern parameters outside the restricted parameter region or the substitution patterns they imply. Some such values, including $\sigma_2 > \sigma_1$, are consistent with random utility maximization at the values of the data, though not globally \citep{BorschSupan1990,HerrigesKling1996,GilMoltoHole2004}; the sequential process is consistent with expected utility maximization for any $\sigma_1, \sigma_2 < 1$. \citet{KovachTserenjigmid2022} read negative nesting parameters as complementarity among products, and \citet{FosgerauMonardoDePalma2024} obtain complementarity in demand from groupings that are not nested, with parameters inside the restricted parameter region. The sequential process generates positive cross effects within a subgroup with nested groupings and single-unit demand.

\section{The sequential choice process}\label{sec:model}

Products $j$ are partitioned into subgroups $h \in H_g$ within groups $g \in G$; $J_{hg}$ denotes the products in subgroup $h$ of group $g$, and the outside option is $j=0$. The utility of consumer $i$ from product $j \in J_{hg}$ is
\begin{equation}\label{eq:utility}
U_{ij} = \delta_j + \zeta_i^{g} + (1-\sigma_2)\,\zeta_i^{h} + (1-\sigma_1)\,\epsilon_{ij},
\end{equation}
and the utility from the outside option is $U_{i0} = \zeta_i^{0}$. Here $\delta_j$ is the mean utility of product $j$; $\zeta_i^{0}$, $\zeta_i^{g}$, $\zeta_i^{h}$, and $\epsilon_{ij}$ are consumer $i$'s shocks to the outside option, group $g$, subgroup $h$, and product $j$; and $\sigma_1$ and $\sigma_2$ are the subgroup and group nesting parameters. Equation~\eqref{eq:utility} has a form similar to the specifications of \citet{Cardell1997}, \citet{Verboven1996}, and \citet{MansleyMillerRyanWeinberg2019}; the static and sequential processes differ only in the distribution of the shocks and in what the consumer knows when each decision is made.

\begin{assumption}\label{ass:shocks}
The shocks $\{\zeta_i^{0}, \zeta_i^{g}\}_{g\in G}$, $\{\zeta_i^{h}\}_{h\in H_g, g\in G}$, and $\{\epsilon_{ij}\}_{j}$ are mutually independent extreme value draws with unit scale and mean zero, and $\sigma_1, \sigma_2 < 1$.
\end{assumption}

\begin{assumption}\label{ass:timing}
The consumer observes the shocks $\{\zeta_i^{0}, \zeta_i^{g}\}_{g\in G}$ and chooses between the outside option and a group $g$; then observes the subgroup shocks $\{\zeta_i^{h}\}_{h\in H_g}$ and chooses a subgroup $h$; then observes the product shocks $\{\epsilon_{ij}\}_{j\in J_{hg}}$ and chooses a product. At each stage, the consumer maximizes expected utility given the shocks observed so far.
\end{assumption}

Under the static random utility model, the shocks $\zeta_i^{g}$ and $\zeta_i^{h}$ instead follow the conjugate distributions of \citet{Cardell1997}, which exist only if $0 \le \sigma_2 \le \sigma_1 < 1$. Assumption~\ref{ass:shocks} requires only $\sigma_1, \sigma_2 < 1$.\footnote{The mean-zero normalization makes $\delta_j$ the mean utility of product $j$ relative to the outside option. Under standard extreme value shocks, this relative mean utility is $\delta_j$ plus $(2-\sigma_1-\sigma_2)$ times Euler's constant for every inside product, and an intercept absorbs the constant.} Define the inclusive values
\begin{equation}\label{eq:inclusive}
\Ihg = (1-\sigma_1)\log\sum_{j\in J_{hg}} \exp\!\left(\frac{\delta_j}{1-\sigma_1}\right),
\qquad
\Ig = (1-\sigma_2)\log\sum_{h\in H_g} \exp\!\left(\frac{\Ihg}{1-\sigma_2}\right).
\end{equation}

\begin{proposition}\label{prop:equivalence}
Under Assumptions~\ref{ass:shocks} and~\ref{ass:timing}, for any $\sigma_1, \sigma_2 < 1$, the probability that consumer $i$ chooses product $j \in J_{hg}$ is $s_j = s_{j|hg}\, s_{h|g}\, s_g$, with
\begin{equation}\label{eq:shares}
s_{j|hg} = \frac{\exp\left(\delta_j/(1-\sigma_1)\right)}{\exp\left(\Ihg/(1-\sigma_1)\right)},
\qquad
s_{h|g} = \frac{\exp\left(\Ihg/(1-\sigma_2)\right)}{\exp\left(\Ig/(1-\sigma_2)\right)},
\qquad
s_g = \frac{\exp(\Ig)}{1+\sum_{g'\in G}\exp(I_{g'})}.
\end{equation}
These are the two-level nested logit shares. Consequently,
\begin{equation}\label{eq:inversion}
\log s_j - \log s_0 = \delta_j + \sigma_1 \log s_{j|hg} + \sigma_2 \log s_{h|g},
\end{equation}
and the consumer's ex ante expected utility is $W = \log\big(1+\sum_{g\in G}\exp(\Ig)\big)$.
\end{proposition}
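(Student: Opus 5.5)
I will argue by backward induction over the three stages of Assumption~\ref{ass:timing}, using only one standard fact about the extreme value distribution. If $X_1,\dots,X_n$ are independent extreme value draws with unit scale and mean zero, and $a_k$ are constants and $c>0$, then:
\begin{itemize}
\item $\Pr\big(a_k + cX_k \ge a_l + cX_l\ \forall l\big) = \exp(a_k/c)/\sum_l \exp(a_l/c)$;
\item $\mathbb{E}\max_k(a_k + cX_k) = c\log\sum_k\exp(a_k/c)$.
\end{itemize}
The second property is where the mean-zero normalization enters: it removes the Euler constant.

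\textbf{Last stage.} Fix a group $g$ and subgroup $h$, and condition on $\zeta_i^g,\zeta_i^h$. The consumer observes $\{\epsilon_{ij}\}_{j\in J_{hg}}$ and maximizes $\delta_j + (1-\sigma_1)\epsilon_{ij}$; the common terms $\zeta_i^g + (1-\sigma_2)\zeta_i^h$ do not affect the choice. Since $1-\sigma_1>0$ by Assumption~\ref{ass:shocks}, the first property with $c=1-\sigma_1$ gives $s_{j|hg}$ in \eqref{eq:shares}. The second property gives the expected continuation value
\[
\zeta_i^g+(1-\sigma_2)\zeta_i^h+\Ihg .
\]

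\textbf{Middle stage.} Given $g$ and $\zeta_i^g$, the consumer observes $\{\zeta_i^h\}_{h\in H_g}$ and, maximizing expected utility, picks $h$ to maximize $\Ihg+(1-\sigma_2)\zeta_i^h$. This is the key point. The shocks $\zeta_i^h$ are drawn independently of the $\epsilon$'s and are genuine extreme value draws, not Cardell's conjugate variables. The nest value therefore has the form ``inclusive value plus an independent extreme value shock scaled by $1-\sigma_2>0$.'' No ordering of $\sigma_1,\sigma_2$ is needed, because the $\epsilon$ maximum has already been replaced by its expectation. The first property then yields $s_{h|g}$, and the second yields the value $\zeta_i^g+\Ig$.

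\textbf{First stage.} The consumer compares $\zeta_i^0$ with $\Ig+\zeta_i^g$ for $g\in G$, all with scale $1$. This gives $s_g$, and the expected utility is $\log(1+\sum_g\exp(\Ig))=W$.

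\textbf{Assembling the result.} Because the stage shocks are mutually independent, the choice probability factors as $s_j=s_{j|hg}s_{h|g}s_g$. Ex ante expected utility is the first-stage expectation, by the law of iterated expectations. I must confirm that the realized utility of the chosen path, averaged over later shocks, equals the continuation values used at each stage. This holds because each later stage's maximization is independent of earlier realizations once those are conditioned on.

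\textbf{Inversion.} This is pure algebra. We have $s_0=1/(1+\sum\exp(I_{g'}))$. Take logs of the three factors, then eliminate $\Ig$ and $\Ihg$:
\begin{itemize}
\item from $s_{j|hg}$: $\Ihg=\delta_j-(1-\sigma_1)\log s_{j|hg}$;
\item from $s_{h|g}$: $\Ig=\Ihg-(1-\sigma_2)\log s_{h|g}$;
\item from $s_g$: $\log s_g-\log s_0=\Ig$.
\end{itemize}
Substituting,
\[
\log s_j-\log s_0=\log s_{j|hg}+\log s_{h|g}+\Ig,
\]
which simplifies to \eqref{eq:inversion}.

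\textbf{Main obstacle.} The delicate point is justifying, at the middle stage, that the decision rule depends on the $\epsilon$'s only through $\Ihg$. This requires the consumer to commit to $h$ before seeing $\epsilon$, as Assumption~\ref{ass:timing} specifies. I will state it explicitly as the reason the static bound $\sigma_2\le\sigma_1$ and $\sigma_2\ge0$ are not needed; negative $\sigma$'s are fine since only $1-\sigma>0$ is used.
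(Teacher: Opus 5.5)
Your proposal is correct and follows the paper's proof essentially step for step. Both use backward induction with the scaled extreme-value choice-probability and expected-maximum formulas at scales $1-\sigma_1$, $1-\sigma_2$, and $1$, and then eliminate $I_{hg}$ and $I_g$ by the same algebra to obtain the inversion; your remarks on factorization by independence and on iterated expectations make explicit what the paper leaves implicit, and they are accurate.
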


\begin{proof}
I solve the consumer's problem by backward induction, using the fact that if $\{e_k\}$ are independent extreme value draws with unit scale and mean zero, then $\max_k\{a_k + \lambda e_k\}$ for $\lambda > 0$ has mean $\lambda\log\sum_k \exp(a_k/\lambda)$, and the probability that $k$ attains the maximum is $\exp(a_k/\lambda)/\sum_{k'}\exp(a_{k'}/\lambda)$ \citep[Theorem 5.2]{McFadden1981}.

At the product stage, the consumer has observed $\zeta_i^{g}$ and $\zeta_i^{h}$ and chooses $j \in J_{hg}$ to maximize $U_{ij}$; since $\zeta_i^{g} + (1-\sigma_2)\zeta_i^{h}$ is common to all $j \in J_{hg}$, this is equivalent to maximizing $\delta_j + (1-\sigma_1)\epsilon_{ij}$. With $\lambda = 1-\sigma_1 > 0$, the expected maximum is $\Ihg$, and the choice probability is $s_{j|hg}$ in \eqref{eq:shares}. At the subgroup stage, the consumer chooses $h \in H_g$ to maximize $\Ihg + \zeta_i^{g} + (1-\sigma_2)\zeta_i^{h}$; since $\zeta_i^{g}$ is common to all $h \in H_g$, this is equivalent to maximizing $\Ihg + (1-\sigma_2)\zeta_i^{h}$. With $\lambda = 1-\sigma_2 > 0$, the expected maximum is $\Ig$ and the choice probability is $s_{h|g}$. The value of group $g$ at the first stage is $V_{ig} = \Ig + \zeta_i^{g}$, and the value of the outside option is $\zeta_i^{0}$. With $\lambda = 1$, the expected maximum is $W$, and the choice probability is $s_g$.

For the inversion, $s_0 = 1/(1+\sum_{g'}\exp(I_{g'}))$, so $\log s_g - \log s_0 = \Ig$. From \eqref{eq:shares}, $\Ig = \Ihg - (1-\sigma_2)\log s_{h|g}$ and $\Ihg = \delta_j - (1-\sigma_1)\log s_{j|hg}$. Adding $\log s_{j|hg} + \log s_{h|g}$ to $\log s_g - \log s_0$ and substituting yields \eqref{eq:inversion}, as in \citet{Berry1994} and \citet{Verboven1996}.
\end{proof}

Of the parameters, the proof uses only the positivity of $1-\sigma_1$ and $1-\sigma_2$. In the static model, the composite shock $\zeta_i^{h} + \frac{1-\sigma_1}{1-\sigma_2}\epsilon_{ij}$ must be extreme value, which is possible only if $\frac{1-\sigma_1}{1-\sigma_2} \le 1$ \citep[i.e., $\sigma_2 \le \sigma_1$; see][Theorem 2.1]{Cardell1997}. In the sequential process, the consumer integrates over the product shocks before observing them, so the two shocks never need to combine into a single extreme value variable.

The argument does not depend on the number of levels. With $K \ge 1$ nesting levels, product $j$ belongs to nests $n_1(j) \subset \cdots \subset n_K(j)$; let $n_0(j) = j$ (the product itself), $\zeta_i^{j} = \epsilon_{ij}$, and $\sigma_{K+1} \equiv 0$. The utility of consumer $i$ from product $j$ is
\begin{equation}\label{eq:utilityK}
U_{ij} = \delta_j + \sum_{k=1}^{K} (1-\sigma_{k+1})\,\zeta_i^{n_k(j)} + (1-\sigma_1)\,\epsilon_{ij},
\end{equation}
which is \eqref{eq:utility} when $K=2$. The consumer observes the shocks one level at a time, from the largest nests down to the products. Let $I_j = \delta_j$ for products, $I_0 = 0$ for the outside option, and $n_{K+1}(j)$ be the universal nest, whose members are the outside option and the nests at level $K$. For a nest $n$ at level $k \le K+1$ whose members (nests at level $k-1$, or products if $k=1$) form the set $C(n)$, define the inclusive value of $n$
\begin{equation}\label{eq:inclusiveK}
I_n = (1-\sigma_k)\log\sum_{m\in C(n)}\exp\!\left(\frac{I_m}{1-\sigma_k}\right).
\end{equation}

\begin{proposition}\label{prop:levels}
Under Assumption~\ref{ass:shocks} for $\zeta_i^{0}$ and the shocks $\zeta_i^{n}$ of all nests and products, and Assumption~\ref{ass:timing} with one stage per level, from level $K$ down to the products, for any $\sigma_1,\dots,\sigma_K < 1$, the choice probability of product $j$ is
\begin{equation}\label{eq:sharesK}
s_j = \prod_{k=1}^{K+1} s_{n_{k-1}(j)|n_k(j)},
\qquad
s_{m|n} = \exp\!\left(\frac{I_m - I_n}{1-\sigma_k}\right) \text{ for } m \in C(n),\ n \text{ at level } k.
\end{equation}
These are the $K$-level nested logit shares. Consequently,
\begin{equation}\label{eq:inversionK}
\log s_j - \log s_0 = \delta_j + \sum_{k=1}^{K} \sigma_k \log s_{n_{k-1}(j)|n_k(j)},
\end{equation}
and the consumer's ex ante expected utility is $W = I_{n_{K+1}(j)}$.
\end{proposition}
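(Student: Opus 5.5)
The proof generalizes the backward induction in the proof of Proposition~\ref{prop:equivalence} to $K$ levels. I will use the same extreme value fact: for independent mean-zero unit-scale extreme value draws $\{e_m\}$ and $\lambda>0$, $\max_m\{a_m+\lambda e_m\}$ has mean $\lambda\log\sum_m\exp(a_m/\lambda)$, and $m$ is the maximizer with probability $\exp(a_m/\lambda)/\sum_{m'}\exp(a_{m'}/\lambda)$. The key object is the continuation value at the stage where the consumer, having entered nest $n$ at level $k$, has observed the shocks of all nests containing $n$ but not yet those of its members.

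The inductive claim, over $k=1,\dots,K+1$, is as follows. Conditional on the consumer being in nest $n$ at level $k$ with shocks $\zeta_i^{n_{k'}}$ observed for all $k'\ge k$, two things hold. First, the expected utility of behaving optimally thereafter is
\[
I_n+\sum_{k'=k}^{K}(1-\sigma_{k'+1})\zeta_i^{n_{k'}}.
\]
Second, the consumer then chooses $m\in C(n)$ with probability $s_{m|n}$. For the outside option at the top level I set its continuation value to $I_0+\zeta_i^0$ with $I_0=0$.

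The base case is $k=1$. Inside nest $n$ at level $1$, all nest shocks are common across products, so the consumer maximizes $\delta_m+(1-\sigma_1)\epsilon_{im}$ over $m\in C(n)$. With $\lambda=1-\sigma_1>0$, this gives the expected maximum $I_n$ from \eqref{eq:inclusiveK} and the stated logit probabilities.

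For the inductive step, suppose the consumer is in nest $n$ at level $k\ge2$ and then observes $\{\zeta_i^m\}_{m\in C(n)}$. Choosing $m$ yields, by the induction hypothesis applied at level $k-1$, the value
\[
I_m+(1-\sigma_k)\zeta_i^m+\sum_{k'\ge k}(1-\sigma_{k'+1})\zeta_i^{n_{k'}}.
\]
The last sum is common across $m$, so the consumer maximizes $I_m+(1-\sigma_k)\zeta_i^m$. With $\lambda=1-\sigma_k>0$, the expected maximum is $I_n$ and the choice probabilities are $s_{m|n}$.

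The top level $k=K+1$ is the same step with $\sigma_{K+1}=0$ and $\lambda=1$. It compares $I_m+\zeta_i^m$ across the level-$K$ nests and $\zeta_i^0$ for the outside option, giving $W=I_{n_{K+1}}$.

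The main point needing care is that the induction hypothesis gives the \emph{conditional expectation} of the continuation value, and that this expectation is the right object to plug in at the earlier stage. This requires independence. The member-level shocks revealed later are independent of everything observed so far, and the choice probabilities at later stages do not depend on earlier-revealed shocks, since those shocks enter additively and in common. Given this, the tower property yields the product formula \eqref{eq:sharesK}, because the stage choices are made on independent shock blocks. Only $1-\sigma_k>0$ is used, so no ordering of the $\sigma_k$ is needed.

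For the inversion, note that $s_0=\exp(-I_{n_{K+1}})$. Write $\log s_j-\log s_0=\sum_{k=1}^{K}\log s_{n_{k-1}|n_k}+I_{n_K}$, using $\sigma_{K+1}=0$ for the top factor. Substitute repeatedly $I_{n_k}=I_{n_{k-1}}-(1-\sigma_k)\log s_{n_{k-1}|n_k}$ down to $I_{n_0}=\delta_j$. The terms telescope, leaving $\delta_j+\sum_k\sigma_k\log s_{n_{k-1}|n_k}$, which is \eqref{eq:inversionK}.
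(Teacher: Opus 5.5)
Your proposal is correct and follows the paper's proof. Both argue by backward induction over levels: the extreme value max/choice-probability fact with $\lambda = 1-\sigma_k$ handles each nested stage, the top stage uses $\lambda = 1$ to give $W = I_{n_{K+1}(j)}$, and the inversion follows from the same telescoping substitution $I_{n_k} = I_{n_{k-1}} - (1-\sigma_k)\log s_{n_{k-1}|n_k}$. Your explicit bookkeeping of the already-observed shocks in the continuation value, and your independence argument for multiplying the stage probabilities, spell out steps the paper leaves implicit without changing the route.
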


\begin{proof}
The proof is by backward induction on the level. Let $n$ be a nest at level $k$, and suppose that the expected utility of choosing $m \in C(n)$, net of the shocks already observed, is $I_m$, as holds for products since $I_j = \delta_j$. The shocks $\zeta_i^{m}$ enter $U_{ij}$ with coefficient $1-\sigma_k$, and the shocks of $n$ and the nests above it are common to all $m \in C(n)$, so the consumer maximizes $I_m + (1-\sigma_k)\zeta_i^{m}$. With $\lambda = 1-\sigma_k > 0$, the expected maximum is $I_n$, which completes the induction, and the choice probability is $s_{m|n}$. At the first stage, $n = n_{K+1}(j)$ and $\lambda = 1$, so $W = I_{n_{K+1}(j)}$. For the inversion, $s_{m|n}$ with $I_0 = 0$ gives $\log s_{n_K(j)|n_{K+1}(j)} - \log s_0 = I_{n_K(j)}$, so $\log s_j - \log s_0 = I_{n_K(j)} + \sum_{k=1}^{K}\log s_{n_{k-1}(j)|n_k(j)}$. Taking logs in $s_{m|n}$ gives $I_{n_k(j)} = I_{n_{k-1}(j)} - (1-\sigma_k)\log s_{n_{k-1}(j)|n_k(j)}$ for $k = 1,\dots,K$; summing over $k$ with $I_{n_0(j)} = \delta_j$ gives $I_{n_K(j)} = \delta_j - \sum_{k=1}^{K}(1-\sigma_k)\log s_{n_{k-1}(j)|n_k(j)}$, and substituting yields \eqref{eq:inversionK}.
\end{proof}

Applied level by level, the construction of \citet{Cardell1997} requires $0 \le \sigma_K \le \cdots \le \sigma_1 < 1$. The sequential process allows any ordering at any depth.

\section{Implications}\label{sec:implications}

\paragraph{Estimation and interpretation} Because \eqref{eq:inversion} is the estimating equation of \citet{Berry1994} and \citet{Verboven1996}, existing estimates of $\sigma_1$ and $\sigma_2$ can be read as estimates of the sequential process, whether or not they satisfy $0\le\sigma_2\le\sigma_1$. Under this process, $1-\sigma_2$ and $1-\sigma_1$ measure the dispersion of the information revealed at the subgroup and product stages relative to the group stage: $\sigma_2 > \sigma_1$ means that the product stage has more dispersed shocks than the subgroup stage, and a negative $\sigma_1$ or $\sigma_2$ means that a lower stage has more dispersed shocks than the group stage. Inside the restricted parameter region, the static and sequential processes produce identical choice probabilities, so data on final choices cannot distinguish them, although data on when consumers observe their shocks could; outside it, the sequential process still gives the estimates an expected-utility-maximizing interpretation.

\paragraph{Cross effects within a subgroup} Let $\beta = (1-\sigma_1)/(1-\sigma_2)$. Differentiating the shares in \eqref{eq:shares}, for two distinct products $j, k \in J_{hg}$,
\begin{equation}\label{eq:cross}
\frac{\partial s_j}{\partial \delta_k} = \frac{s_j\, s_{k|hg}}{1-\sigma_1}\Big[(\beta - 1) - \beta\, s_{h|g}\big(\sigma_2 + (1-\sigma_2)\,s_g\big)\Big].
\end{equation}

\begin{proposition}\label{prop:cross}
For any $\sigma_1, \sigma_2 < 1$ and distinct products $j,k \in J_{hg}$, $\partial s_j/\partial\delta_k > 0$ if and only if
\begin{equation}\label{eq:condition}
s_{h|g}\big(\sigma_2 + (1-\sigma_2)\,s_g\big) < \frac{\sigma_2-\sigma_1}{1-\sigma_1}.
\end{equation}
The condition never holds if $\sigma_2 \le \sigma_1$ and $\sigma_1 \ge 0$, a set that contains the restricted parameter region. It holds whenever $s_{h|g}$ is small enough if $\sigma_1 < \sigma_2$ and $g$ has at least two subgroups, and whenever $s_g$ is small enough and $s_{h|g}$ large enough if $\sigma_2 \le \sigma_1 < 0$.
\end{proposition}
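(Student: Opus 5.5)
The plan is to start from the derivative formula \eqref{eq:cross}, which I take as given (it follows by differentiating the shares in \eqref{eq:shares} through $\Ihg$ and $\Ig$). The prefactor $s_j s_{k|hg}/(1-\sigma_1)$ is strictly positive because $\sigma_1<1$ and all shares are interior. So $\partial s_j/\partial\delta_k>0$ if and only if the bracket is positive, that is, $\beta s_{h|g}(\sigma_2+(1-\sigma_2)s_g) < \beta-1$. Since $\beta>0$, I divide by $\beta$ and note that $(\beta-1)/\beta = 1-(1-\sigma_2)/(1-\sigma_1) = (\sigma_2-\sigma_1)/(1-\sigma_1)$. This gives \eqref{eq:condition} exactly. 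If I want to check \eqref{eq:cross} itself, I would write $\log s_j = \delta_j/(1-\sigma_1) + (1/(1-\sigma_2)-1/(1-\sigma_1))\Ihg + (1-1/(1-\sigma_2))\Ig + \log s_g$, and use $\partial\Ihg/\partial\delta_k = s_{k|hg}$, $\partial \Ig/\partial\delta_k = s_{h|g}s_{k|hg}$ and $\partial \log s_g/\partial\Ig = 1-s_g$.

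For the ``never'' claim, I write $A = s_{h|g}(\sigma_2+(1-\sigma_2)s_g)$. The quantity $\sigma_2+(1-\sigma_2)s_g$ is a convex combination of $\sigma_2$ and $1$ with weight $s_g\in(0,1)$, so it lies strictly between $\sigma_2$ and $1$.
\begin{itemize}
\item If $\sigma_2\le\sigma_1$, the right side of \eqref{eq:condition} is $\le 0$.
\item If also $\sigma_2\ge 0$, then $A>0$ and the condition fails.
\item If $\sigma_2<0\le\sigma_1$, I cannot rely on the sign of $A$ and need a bound instead. Since $s_{h|g}\le 1$ and $\sigma_2+(1-\sigma_2)s_g>\sigma_2$, I get $A\ge\min(0,\sigma_2+(1-\sigma_2)s_g)>\sigma_2$. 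I then compare with $(\sigma_2-\sigma_1)/(1-\sigma_1)$. Because $\sigma_1\ge 0$ and $\sigma_2<1$, the inequality $(\sigma_2-\sigma_1)/(1-\sigma_1)\le\sigma_2$ reduces to $\sigma_2-\sigma_1\le\sigma_2-\sigma_1\sigma_2$, i.e.\ $\sigma_1(1-\sigma_2)\ge 0$, which holds. Hence $A>\sigma_2\ge$ RHS and the condition fails.
\end{itemize}
This case split around the sign of $\sigma_2$ is the step needing care. The restricted region $0\le\sigma_2\le\sigma_1$ is clearly contained in this set.

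For the sufficiency claims I treat $s_{h|g}$ and $s_g$ as freely adjustable through the $\delta$'s.
\begin{itemize}
\item If $\sigma_1<\sigma_2$, the right side is a fixed positive number. $A\le s_{h|g}\max(1,|\sigma_2|)$ in absolute value, so $A\to 0$ as $s_{h|g}\to 0$, and the condition holds for $s_{h|g}$ small. This is attainable because $g$ has a second subgroup whose inclusive value can be raised to drive $s_{h|g}$ down; with only one subgroup $s_{h|g}=1$.
\item If $\sigma_2\le\sigma_1<0$, then as $s_g\to 0$ and $s_{h|g}\to 1$, $A\to\sigma_2$. The right side $(\sigma_2-\sigma_1)/(1-\sigma_1)$ exceeds $\sigma_2$ exactly when $\sigma_1(1-\sigma_2)<0$, which is true here. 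By continuity the strict inequality holds for $s_g$ small and $s_{h|g}$ near one.
\end{itemize}

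The only real obstacle is the sign bookkeeping when $\sigma_2<0$, where $A$ may be negative; the comparison with $\sigma_2$ handles it in both directions.
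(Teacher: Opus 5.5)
Your proposal is correct and follows the paper's route. You reduce to $X<R$ with $R=(\sigma_2-\sigma_1)/(1-\sigma_1)$, use that $\sigma_2+(1-\sigma_2)s_g\in(\sigma_2,1)$, show $R\le\min(0,\sigma_2)<X$ for the ``never'' claim (your split on the sign of $\sigma_2$ is exactly this), and use $R>0$, respectively $R>\sigma_2$, in the two sufficiency cases, where the paper gives explicit thresholds (e.g.\ $X<s_{h|g}$, so $s_{h|g}\le R$ suffices) instead of your limit and continuity arguments. One slip in your optional check of \eqref{eq:cross}: the last terms of $\log s_j$ should be $-\Ig/(1-\sigma_2)+\log s_g$, not $(1-1/(1-\sigma_2))\Ig+\log s_g$, which counts $\Ig$ twice and adds a spurious $s_{h|g}s_{k|hg}$ to $\partial\log s_j/\partial\delta_k$.
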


\begin{proof}
Since $\beta > 0$, the bracket in \eqref{eq:cross} is positive if and only if $X < R$, where $X = s_{h|g}\big(\sigma_2+(1-\sigma_2)s_g\big)$ and $R = (\beta-1)/\beta = (\sigma_2-\sigma_1)/(1-\sigma_1)$. Because $0 < s_g < 1$, the factor $\sigma_2+(1-\sigma_2)s_g = (1-s_g)\sigma_2 + s_g$ lies strictly between $\sigma_2$ and $1$. If $\sigma_2 \le \sigma_1$ and $\sigma_1 \ge 0$, then $R \le 0$ and $\sigma_2 - R = \sigma_1(1-\sigma_2)/(1-\sigma_1) \ge 0$, so $R \le \min(0,\sigma_2)$, while $0 < s_{h|g} \le 1$ gives $X > \min(0,\sigma_2)$; hence $X > R$. If $\sigma_1 < \sigma_2$, then $0 < R < 1$ and $X < s_{h|g}$, so $X < R$ whenever $s_{h|g} \le R$. If $\sigma_2 \le \sigma_1 < 0$, then $X \le s_{h|g}\sigma_2 + s_g(1-\sigma_2)$, and $s_{h|g}\sigma_2 < R$ whenever $s_{h|g} > (\sigma_1-\sigma_2)/(-\sigma_2(1-\sigma_1))$, a bound below one because $\sigma_1 < 0$; for such $s_{h|g}$, $X < R$ whenever $s_g < (R - s_{h|g}\sigma_2)/(1-\sigma_2)$.
\end{proof}

An improvement in product $k$, such as a price cut, raises the inclusive value of subgroup $h$ and draws consumers into it at the subgroup stage; once there, some of them draw product shocks that favor $j$. A marginal increase in $\delta_k$ raises the inclusive value of $h$ by $s_{k|hg}$, so $\log s_{h|g}$ rises by $(1-s_{h|g})\,s_{k|hg}/(1-\sigma_2)$, while $\log s_{j|hg}$ falls by $s_{k|hg}/(1-\sigma_1)$. When $\sigma_2 > \sigma_1$ and $s_{h|g}$ is small enough, the rise exceeds the fall: the inflow of consumers into $h$ outweighs the diversion from $j$ within it, and the rise in the share of group $g$ adds to the effect. For $k$ in another subgroup of $g$, $\partial s_j/\partial\delta_k > 0$ if and only if $s_g < -\sigma_2/(1-\sigma_2)$, which requires $\sigma_2 < 0$. Under the sequential process, these patterns arise with single-unit demand, from the order in which consumers observe their shocks.

\section{Conclusion}\label{sec:conclusion}

The nested logit share system and its linear inversion follow from a sequential choice process with independent extreme value shocks revealed stage by stage, for any nesting parameters below one and any number of levels. With other shock distributions, the equivalence need not hold. Parameter estimates outside the restricted parameter region but below one remain consistent with expected-utility-maximizing consumers who observe their shocks sequentially.

\section*{Acknowledgements}
I am grateful to Conor Ryan for his guidance through the theoretical complexities of the two-level nested logit framework.

\section*{Funding}
This research did not receive any specific grant from funding agencies in the public, commercial, or not-for-profit sectors.

\section*{Data availability}
No data was used for the research described in the article.

\section*{Declaration of generative AI and AI-assisted technologies in the manuscript preparation process}
During the preparation of this work the author used Claude (Anthropic) to search and check the literature and to draft and edit the text. After using this tool, the author reviewed and edited the content as needed and takes full responsibility for the content of the publication.

\bibliography{\bib}

\end{document}